\newcommand{\X}{\mathcal{X}}
\newcommand{\1}{\mathbf{1}}
\newcommand{\LL}{L_{\mu\nu}}
\newcommand{\LLL}{L_{\mu\nu}^I}
\newcommand{\CC}{C_{\mu\nu}}
\newcommand{\CCC}{C_{\nu\mu}}
\newcounter{licznik_t}
\newcounter{licznik_l}
\newcounter{licznik_w}
\newcounter{licznik_u}
\newcounter{licznik_pr}
\newtheorem{tw}[licznik_t]{Theorem}
\newtheorem{lm}[licznik_l]{Lemma}
\newtheorem{uwaga}[licznik_u]{Remark}
\newtheorem{wniosek}[licznik_w]{Corollary}
\theoremstyle{definition}
\newtheorem{przyklad}[licznik_pr]{Example}
\renewenvironment{proof}[1][Proof]{\mbox\\\noindent\textit{#1.} }{\hfill\qed\\}
\newdimen\dummy
\newcommand{\E}{\mathsf{E}}
\newcommand{\R}{\mathbb{R}}
\begin{document}

\author{Wioletta Szeligowska,
\quad Marek Kaluszka\thanks{
Corresponding author. E-mail adress: kaluszka@p.lodz.pl; tel.: +48 42
6313859; fax.: +48 42 6363114.}\\
\medskip \emph{\ {\small {Institute of Mathematics, Lodz University of
Technology, 90-924 Lodz, Poland}}\\
 }}
\date{\today}
\title{On Jensen's inequality for generalized Choquet integral with an application
to risk aversion
}
\maketitle


\begin{abstract} In the paper we give necessary and sufficient conditions for the Jensen inequality to hold for the generalized Choquet integral with respect to 
a pair of capacities. Next, we apply obtained result to the theory of risk aversion 
by providing the assumptions on  utility function and capacities under which 
an agent is risk averse. Moreover, we show that the Arrow-Pratt theorem can be generalized to cover the case, where the expectation is replaced by the generalized Choquet integral. 
\end{abstract}

\textit{Keywords: }{Choquet integral;
symetric Choquet integral; Credibility measure; Belief function, Uncertainty measure; Possibility measure; Risk aversion; Measures of risk aversion.}

\section{Introduction}
Let $(\Omega,\mathcal F)$ be a~measurable space, where $\mathcal F$ is a~$\sigma$-algebra of subsets of a~non-empty set $\Omega .$ 
A~{\it (normalized) capacity} on $\mathcal{F}$  is a set function $\mu\colon \mathcal{F}\to [0,1]$ such that $\mu(\emptyset)=0,$ $\mu(\Omega)=1$ and $\mu(A)\le\mu(B)$ whenever $A\subset B.$ Capacities are also called {\it fuzzy measures}, {\it nonadditive measures} or {\it monotone measures} $\cite{zwang1}.$ 
We write $\overline{\mu}$ for the {\it conjugate} or {\it dual} capacity of  $\mu ,$ that is, 
$\overline{\mu}(A)=1-\mu (A^c)$, where $A^c=\Omega\backslash A. $
The notion of capacity was introduced by Gustave Choquet in 1950 and has played an important role in fuzzy set theory, game theory, the rank-dependent expected utility model, the Dempster–Shafer theory,  and many others \cite{den,zwang1}.

The generalized Choquet integral is defined as 
\begin{align}\label{choquet}
\CC(X)=\int _0^\infty\mu (X>t)dt-\int _{-\infty}^0\nu(X<t)dt,
\end{align} 
provided that at least one of the two improper Riemann integrals is finite. 
Hereafter, $\mu (X\in A)=\mu (\{\omega\colon X(\omega)\in A\}).$
It is introduced by Tversky and Kahneman \cite{kah} for discrete random
variables and is used to describe the mathematical foundations of Cumulative
Prospect Theory.
Two outstanding examples
of  generalized Choquet's integral are: the {\it Choquet integral}  $C_{\mu}:=C_{\mu\overline{\mu}}$ and the {\it symmetric Choquet integral} $\breve{C}_\mu:=C_{\mu\mu},$ also known as the {\it $\breve{\hbox{S}}$ipo$\breve{\hbox{s}}$ integral} (see \cite{den,grabisch,zwang1}). If $\mu=\nu=P$ with a probability measure $P$, then the generalized Choquet integal reduces to the expectation  $\E _P X.$ 

The Jensen inequality says that  $\E f(X)\le f(\E X)$ for any concave function $f$ and for any random variable $X$ with a finite expectation $\E X.$ This is  one of fundamental result of the measure theory, having enormous applications in probability theory, statistics and other branch of mathematics. A lot of an extensions of this inequality is known with some additional assumptions on the function $f$ and for  different integrals  (see e.g. \cite{Girotto,ka,ka5, mit,Nic,pap,proschan,rom,st} and the references therein). To the best of our knowledge,  the Jensen inequality  
for the integral \eqref{choquet} has not been considered so far.  

The paper is organized as follows. In Section 2, we derive necessary and sufficient conditions for the capacities $\mu,\nu$  and function $f$ so that the Jensen inequality for the generalized Choquet integral holds.  In Section 3, two fundamental result of the risk theory  are
extended. The first one deals with existence of risk aversion 
and the second is the Arrow-Pratt theorem. 
The Appendix contains several commonly encountered examples of capacities which will be useful in Sections 2 and 3.

\section{Main result}\label{2}

Throughout the paper,  we denote by $I$ any open interval containing 0, bounded or not. Write $\LLL$ for the set of such measurable functions $X\colon \Omega \to I$ that $\CC(X)\in I.$ To prove our main result, we need the following lemma.
 
\begin{lm} Given any capacities $\mu,\nu$, the  integral $\eqref{choquet}$ has the following properties:
\begin{enumerate}
\item[(C1)] If   "$>$" or "$<$" in \eqref{choquet} is replaced by  "$\ge $" or  "$\le $", respectively,  then 
the value of generalized Choquet integral does not change.
\item[(C2)] $\CC(X)\le \CC(Y)$  
if $X(\omega)\le Y(\omega)$ 
for all $\omega$,  
\item[(C3)] $\CC(bX)=b\CCC(X)$ for $b\le 0$ and $\CC(bX)=b\CC(X)$ for all $b>0$ and $X\in \LL,$  
\item[(C4)] 
$\CC(a+X) =a+\CC (X)+
\int\limits_{-a}^{0}\bigl(\mu(X>s)-\overline{\nu}(X\ge s)\bigr) ds
$ for all $a\in \R.$ 
\end{enumerate}
\end{lm}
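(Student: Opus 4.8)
The plan is to establish the four properties one at a time, since each reduces to an elementary manipulation of the two Riemann integrals in \eqref{choquet}; the only genuinely delicate point is the translation formula (C4). For (C1), I would fix $X$ and note that both $t\mapsto\mu(X>t)$ and $t\mapsto\mu(X\ge t)$ are non-increasing. Since $\{X\ge t'\}\subset\{X>t\}\subset\{X\ge t\}$ whenever $t'>t$, monotonicity of $\mu$ gives $\mu(X\ge t')\le\mu(X>t)\le\mu(X\ge t)$; letting $t'\downarrow t$ shows that the two integrands coincide at every continuity point of the monotone function $t\mapsto\mu(X\ge t)$, hence off an at most countable set. Two monotone (therefore Riemann integrable) functions that agree outside a countable, Lebesgue-null set have equal Riemann integrals on every bounded interval, so the improper integrals agree as well; the identical argument handles the $\nu$-term with "$<$" versus "$\le$".

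Property (C2) is immediate: $X(\omega)\le Y(\omega)$ for all $\omega$ yields $\{X>t\}\subset\{Y>t\}$ and $\{Y<t\}\subset\{X<t\}$, so by monotonicity of $\mu$ and $\nu$ the first integral in \eqref{choquet} does not decrease while the subtracted one does not increase, whence $\CC(X)\le\CC(Y)$. For (C3) I would apply the change of variable $s=t/b$ (valid for $X\in\LL$). When $b>0$ this preserves orientation and the events are unchanged, giving $\CC(bX)=b\CC(X)$ at once. When $b<0$ the substitution reverses orientation and turns $\{bX>t\}$ into $\{X<s\}$ and $\{bX<t\}$ into $\{X>s\}$ exactly, so the roles of $\mu$ and $\nu$ and of the two half-lines are swapped; collecting the factor $-b=|b|$ and using $-|b|=b$ produces precisely $b\CCC(X)$. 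The case $b=0$ is trivial since $\CC(0)=0$.

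The main work is (C4). Putting $Y=a+X$ and substituting $s=t-a$ in both integrals of \eqref{choquet} (using $Y>t\iff X>t-a$ and $Y<t\iff X<t-a$) first yields the clean identity
\[ \CC(a+X)=\int_{-a}^{\infty}\mu(X>s)\,ds-\int_{-\infty}^{-a}\nu(X<s)\,ds. \]
I would then split each integral at $0$ and recombine against $\CC(X)$; the cases $a>0$ and $a<0$ give the same outcome once the oriented integral $\int_{-a}^0$ is used, and $a=0$ is trivial. What remains is the correction $\int_{-a}^0\bigl(\mu(X>s)+\nu(X<s)\bigr)\,ds$. The final step is the conjugacy identity $\overline{\nu}(X\ge s)=1-\nu((X\ge s)^c)=1-\nu(X<s)$, which replaces $\nu(X<s)$ by $1-\overline{\nu}(X\ge s)$; the constant $1$ contributes $\int_{-a}^0 ds=a$, leaving exactly $a+\CC(X)+\int_{-a}^0\bigl(\mu(X>s)-\overline{\nu}(X\ge s)\bigr)\,ds$. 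The obstacle here is purely one of careful bookkeeping — tracking the orientation of the substitution and the sign of the split when $a<0$ — rather than any conceptual difficulty, and finiteness causes no trouble because the correction integral runs over a bounded interval with integrand in $[0,1]$.
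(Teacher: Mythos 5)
Your proposal is correct and follows essentially the same route as the paper: the countable-discontinuity argument for monotone integrands in (C1), set inclusion plus monotonicity of $\mu,\nu$ in (C2), the change of variable $s=t/b$ with the swap of $\mu$ and $\nu$ for $b<0$ in (C3), and for (C4) the shift $s=t-a$, a split of the integrals at $0$ with oriented integrals, and the conjugacy identity $\nu(X<s)=1-\overline{\nu}(X\ge s)$ producing the additive term $a$. The only trivial slip is that the intermediate integrand $\mu(X>s)+\nu(X<s)$ lies in $[0,2]$ rather than $[0,1]$, which is immaterial since boundedness on a bounded interval is all that is used.
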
  
      
\begin{proof}
(C1) Let  $\mu (X>t)<\mu(X\ge t)$ for some $t\in \R$. Then 
$$
\mu(X\ge t)>\mu (X>t)\ge \lim_{s\to t,s>t}\mu (X\ge s),
$$
so $t$  is a point of discontinuity of the function $f(s):=\mu(X\ge s)$. 
Since $f$ is increasing,  the set of points where $f$ is not continuous is at most countable, so $\mu (X>t)=\mu(X\ge t)$ almost everywhere.
Similar reasoning shows that $\nu (X<t)=\nu(X\le t)$ almost everywhere.

(C2) If $X\le Y$, then by monotonicity of $\mu$ and $\nu$ we have $\mu (X>t)\le \mu(Y>t)$ and $\nu (X<t)\ge \nu (Y<t)$ for each $t,$ which implies property  C2.  

(C3) It is clear that $\CC(0\cdot X)=0\cdot \CC(X)$. For  $b>0$ we have 
\begin{align*}
\CC(bX)=\int _0^\infty\mu (X>t/b)dt-\int _{-\infty}^0\nu(X<t/b)dt=b\CC(X),
\end{align*} 
By similar reasoning, $\CC(bX)=b\CCC(X)$ for $b<0$.

(C4) For an arbitrary $a\in \R$ we have
\begin{align*}
\CC(a+X)&=\int _{-a}^\infty\mu (X>s)ds-\int _{-\infty}^{-a}\nu(X<s)ds\\
&=\CC(X)+\int _{-a}^0\mu(X>s)ds-\int _0^{-a}\nu (X<s)ds\\
&=\CC(X)+a+\int_{-a}^{0}\bigl(\mu(X>s)-\overline{\nu}(X\ge s)\bigr) ds.
\end{align*} 
\end{proof}

First we establish  conditions under which the Jensen inequality holds. 

\begin{tw}\label{tw1}
Assume that   $f\colon I\to \R$ is an increasing and concave function and $f(0)\ge 0.$ Then  the following Jensen inequality holds for all $X\in \LLL$
\begin{align}\label{Jensen} 
\CC(f(X))\le f(\CC (X))
\end{align}
 if and only if 
 $\mu\le \overline{\nu}$, that is, $\mu(A)\le \overline{\nu}(A)$ for all $A.$
\end{tw}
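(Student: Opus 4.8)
The plan is to prove the two implications separately, using the supporting-line characterization of concavity together with the structural properties (C1)--(C4) of the integral established in the Lemma.

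For sufficiency, I would assume $\mu\le\overline{\nu}$ and fix $X\in\LLL$. Set $c:=\CC(X)$, which lies in the open interval $I$ by the very definition of $\LLL$, so $f$ admits a supporting line from above at $c$: there is a slope $\lambda\in\R$ with $f(x)\le \lambda x+b$ for all $x\in I$ and equality at $x=c$, where $b:=f(c)-\lambda c$. Two sign facts drive the argument. First, since $f$ is increasing we have $\lambda\ge 0$. Second -- and this is exactly where the hypothesis $f(0)\ge 0$ enters -- evaluating the supporting line at $x=0$ gives $b=f(c)-\lambda c\ge f(0)\ge 0$. I would then apply (C2) to the pointwise bound $f(X)\le \lambda X+b$ (existence of $\CC(f(X))$ being ensured by this bound and monotonicity), compute the right-hand side with (C3), which gives $\CC(\lambda X)=\lambda c$ for $\lambda>0$ (the case $\lambda=0$ reduces to $\CC(f(X))\le b=f(c)$ at once), and finally apply (C4) with $a=b$. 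The condition $\mu\le\overline{\nu}$ together with monotonicity of $\overline{\nu}$ forces $\mu(\lambda X>s)\le\overline{\nu}(\lambda X>s)\le\overline{\nu}(\lambda X\ge s)$, so the integrand in (C4) is nonpositive; since $b\ge 0$ the interval $[-b,0]$ is correctly oriented, whence the (C4) correction term is $\le 0$. Chaining the inequalities yields $\CC(f(X))\le b+\lambda c=f(c)=f(\CC(X))$, which is \eqref{Jensen}.

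For necessity I would argue by contraposition: suppose $\mu(B)>\overline{\nu}(B)$ for some $B\in\mathcal F$, and exhibit an admissible $f$ and $X$ violating \eqref{Jensen}. It already suffices to take the affine (hence increasing and concave) function $f(x)=x+k$ with a small $k>0$, so that $f(0)=k\ge 0$, together with the two-valued variable $X=-k\,\mathbf{1}_{B^c}$, shrinking $k$ if necessary so that $-k\in I$ (and hence $[-k,0]\subset I$ and $\CC(X)=-k\,\nu(B^c)\in I$, i.e.\ $X\in\LLL$). A direct computation from \eqref{choquet} gives $\CC(X)=-k\,\nu(B^c)$ and, since $f(X)=k\,\mathbf{1}_{B}$, also $\CC(f(X))=k\,\mu(B)$, while $f(\CC(X))=k\bigl(1-\nu(B^c)\bigr)=k\,\overline{\nu}(B)$; equivalently, this is the statement that the (C4) correction term equals $k\bigl(\mu(B)-\overline{\nu}(B)\bigr)$. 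Hence $\CC(f(X))-f(\CC(X))=k\bigl(\mu(B)-\overline{\nu}(B)\bigr)>0$, contradicting \eqref{Jensen}.

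The routine parts are the two explicit integral computations in the necessity step and the verification that every integral involved is well defined. The main obstacle is the sign bookkeeping in the sufficiency step: one must check both that $b\ge 0$ (which is precisely what the normalization $f(0)\ge 0$ buys, and without which the orientation of $[-b,0]$ would reverse and the argument would collapse) and that $\mu\le\overline{\nu}$ renders the (C4) integrand nonpositive after passing from the strict to the non-strict superlevel set through monotonicity of $\overline{\nu}$. I expect the interplay of these two signs to be the delicate point, and it also explains transparently why the result is an equivalence rather than a one-sided implication, since the affine test functions already detect any failure of $\mu\le\overline{\nu}$.
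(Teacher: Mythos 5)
Your proposal is correct and takes essentially the same approach as the paper: for sufficiency, a supporting line at $c=\CC(X)$ whose intercept is nonnegative thanks to $f(0)\ge 0$, combined with (C2)--(C4) and the observation that $\mu\le\overline{\nu}$ makes the C4 correction term over $[-b,0]$ nonpositive; for necessity, the very same test pair as the paper, namely an affine $f$ and a two-valued $X=-k\,\1_{B^c}$, yielding the defect $k\bigl(\mu(B)-\overline{\nu}(B)\bigr)$. The only difference is organizational: the paper isolates the translation inequality $\CC(a+X)\le a+\CC(X)$ for $a\ge 0$ as an intermediate condition (B) and characterizes it separately, whereas you inline that step directly into the supporting-line argument.
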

\begin{proof} First we show that the  inequality $\eqref{Jensen}$ holds for all $X\in \LLL$  if and only if the following condition is valid:

(B) for all
$X\in \LLL$ and  $a\ge 0$  
\begin{align}\label{Jen1}
\CC(a+X)\le a+\CC(X).
\end{align}
Of course, condition $B$ is necessary for $\eqref{Jensen}$ to hold as  $f(x)=a+x$ is an increasing and concave function and $f(0)\ge 0$. 
We shall prove that the condition $B$ is also sufficient. 
Since  $f$ is increasing and concave, we have for  $x,y\in \R$ 
\begin{align} \label{nier1}
f(y) \leq f(x)+f'(x)(y-x),
\end{align}
where $f'(x)$ is the
right-sided derivative of $f$ at $x$.
As $f$ is concave on the open interval $I$, the derivative $f'(x)$ exists and is finite for all $x\in I$ (see \cite{kuczma}).   Moreover,  
$f(x)-xf'(x)\ge 0$; to see this,  put $y=0$ in \eqref{nier1}.  
From \eqref{nier1}, condition B and properties C2 and C3, we get
\begin{align}\label{jen2}
\CC(f(X))&\le \CC(f(x)-f'(x)x+f'(x)X) \nonumber \\
&\le f(x)-f'(x)x+\CC(f'(x)X)\nonumber\\
& = f(x)-f'(x)x+f'(x)\CC(X)
\end{align}
for all $x\in I$. Substituting  $x=\CC(X)$ in \eqref{jen2} we obtain  $\eqref{Jensen}$, so condition B is sufficient for $\eqref{Jensen}$  to hold.

It follows from condition B and properties C1 and C4 that  $\eqref{Jensen}$ is fulfilled for all  $X\in\LLL$ if and only if  for all $X\in \LLL$ and $a>0$ we have  
\begin{align}\label{w011}
\int_{-a}^{0}\left(\mu (X\geq
s)-\overline{\nu}(X\geq s)\right) ds\le 0.  
\end{align}
Put $X=b\1 _{A^c},$ where $-a<b<0$ and $A$ is any measurable set.  Then  
the inequality  $\eqref{w011}$ is of the form  $(-b)(\mu(A)-\overline{\nu}(A))\le 0$, and so  $\eqref{w011}$ holds for any $A$  if and only if $\mu (A)\le \overline{\nu}(A)$ for all $A$.
\end{proof}

For the Choquet integral $C_{\mu}(X)$ the condition $\mu\le \overline{\nu}$ is satisfied for all capacities $\mu,\nu$. Moreover, for the symmetric Choquet integral $\breve{C}_\mu(X)$ the condition holds, e.g. if $\mu$ is any superadditive measure (that is, $\mu (A)+\mu (B)\le \mu (A\cup B)$ for all $A,B$) or uncertainty measure  (see the Appendix, Example \ref{ex5}).  

Now we move to characterization the functions $f$ for which the inequality $\eqref{Jensen}$ holds. First, consider the case when both $\mu$ and $\nu$ are  $\{0,1\}$-valued capacities, that is, $\mu(A),\nu(A)\in \{0,1\}$ for all $A.$  

\begin{tw} \label{tw2}  Assume that $\mu,\nu$ are $\{0,1\}$-valued capacities and  $f\colon I\to\R$ is an increasing and continuous function such that  $f(0)=0.$ 
Assume also that $\mu (B)=1$ and $\nu (B^c)=1$ for some $B$. Then $\CC(f(X))\le f(\CC(X))$ for all $X\in \LLL$ if and only if $f$ is weakly superadditive, that is, $f(a)+f(b)\le f(a+b)$ for $a\le 0\le b$.
If  $\mu (B)=0$ or $\nu (B^c)=0$ for an arbitrary $B,$ then   
the Jensen inequality holds true without any extra assumptions on $f$.  
\end{tw}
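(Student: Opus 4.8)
The plan is to collapse the generalized Choquet integral of a $\{0,1\}$-valued pair to a two-number summary. For $X\in\LLL$ I set $\alpha=\sup\{t\colon \mu(X>t)=1\}$ and $\beta=\inf\{t\colon \nu(X<t)=1\}$. Since $\mu(X>t)$ is non-increasing in $t$ and $\nu(X<t)$ is non-decreasing, both are $\{0,1\}$-valued functions with a single jump, so evaluating the two improper integrals in \eqref{choquet} directly gives $\CC(X)=\alpha^{+}-\beta^{-}$, where $\alpha^{+}=\max(\alpha,0)$ and $\beta^{-}=\max(-\beta,0)$. The key step is then to show that $f$ commutes with these summaries: $\alpha(f(X))=f(\alpha)$ and $\beta(f(X))=f(\beta)$ for every increasing continuous $f$. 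For $\alpha(f(X))\ge f(\alpha)$, given $t<f(\alpha)$ continuity produces $s<\alpha$ with $f(s)>t$, whence $\{X>s\}\subseteq\{f(X)>t\}$ and $\mu(f(X)>t)\ge\mu(X>s)=1$; for the reverse, given $t>f(\alpha)$ I take $s=\sup\{x\colon f(x)\le t\}>\alpha$, use continuity to get $f(s)\le t$, so $\{f(X)>t\}\subseteq\{X>s\}$ and $\mu(f(X)>t)=0$. The argument for $\beta$ is symmetric. I expect this lemma, together with the attainment and boundary bookkeeping (whether the suprema are attained and whether $\alpha,\beta$ lie in the open interval $I$), to be the main obstacle, since $f$ need not be strictly increasing.

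Granting the lemma, the inequality \eqref{Jensen} reduces to $f(\alpha)^{+}-f(\beta)^{-}\le f(\alpha^{+}-\beta^{-})$, which I would settle by splitting on the signs of $\alpha$ and $\beta$. Using $f(0)=0$ and monotonicity, the three cases $\alpha,\beta\ge0$, $\alpha,\beta\le0$, and $\alpha\le0\le\beta$ all give equality (the last because both sides equal $0$). The only case with content is $\alpha\ge0\ge\beta$, where the inequality becomes $f(\beta)+f(\alpha)\le f(\alpha+\beta)$, which is precisely weak superadditivity with $a=\beta\le0\le\alpha=b$. This establishes sufficiency.

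For necessity I would exhibit a witness built from $B$. With $\mu(B)=1$ and $\nu(B^{c})=1$, I put $X=b\1_{B}+a\1_{B^{c}}$ for $a\le0\le b$ chosen so that $a,b,a+b\in I$. A direct computation gives $\alpha=b$ and $\beta=a$, hence $\CC(X)=a+b$; since $f$ increasing with $f(0)=0$ preserves the sign pattern ($f(b)\ge0\ge f(a)$), the same computation yields $\CC(f(X))=f(a)+f(b)$. Then \eqref{Jensen} forces $f(a)+f(b)\le f(a+b)$, that is, weak superadditivity on the relevant part of $I$.

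For the final claim I note that the hypothesis ``$\mu(B)=0$ or $\nu(B^{c})=0$ for every $B$'' is exactly the negation of the existence hypothesis, i.e.\ no $B$ satisfies $\mu(B)=\nu(B^{c})=1$. This excludes the only nontrivial case above: if $\alpha>0$ and $\beta<0$, then $\mu(X>0)=1$ and $\nu(X<0)=1$, so $B=\{X>0\}$ would satisfy $\mu(B)=1$ and $\nu(B^{c})\ge\nu(X<0)=1$, a contradiction. Hence for every $X\in\LLL$ one of the equality cases holds, and \eqref{Jensen} is valid for any increasing continuous $f$ with $f(0)=0$, with no further restriction.
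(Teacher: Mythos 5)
Your proof is correct and follows essentially the same route as the paper's: the paper collapses the integral via $b_X=\inf\{t\ge 0\colon\mu(X>t)=0\}$ and $a_X=\sup\{t\le 0\colon\nu(X<t)=0\}$ (your $\alpha^{+}$ and $-\beta^{-}$), proves the same commutation $b_{f(X)}=f(b_X)$, $a_{f(X)}=f(a_X)$, uses the same witness $X=b\1_B+a\1_{B^c}$ for necessity, and trivializes the inequality in the same way when no $B$ with $\mu(B)=\nu(B^c)=1$ exists. Your only departures are cosmetic: the explicit four-way sign analysis (which the paper's truncated summaries make automatic, since $a_X\le 0\le b_X$ by construction) and a more careful commutation argument for non-strictly increasing $f$; the boundary bookkeeping you flag is glossed over in the paper as well, which simply asserts $a_X,b_X\in I$.
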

\begin{proof}  
Put $b_X=\inf\{t\ge 0\colon\mu (X>t)=0\}$ and 
$a_X=\sup\{t\le 0\colon \nu(X<t)=0\}$ with  $X\in \LLL$. Clearly,   $a_X,b_X\in I$ and 
$\CC(f(X))=a_{f(X)}+b_{f(X)}$. By continuity and monotonicity of $f$, we can see that  
\begin{align*}
b_{f(X)}&=\inf \{f(s)\ge 0\colon\mu (f(X)>f(s))=0\}\\
&=\inf \{f(s)\ge 0\colon\mu(X>s)=0\}=f(b_X)
\end{align*} 
and $a_{f(X)}=f(a_X).$
Hence, the inequality $\CC(f(X))\le f(\CC(X))$ holds for all $X\in \LLL$
if and only if $f(a_X)+f(b_X)\le f(a_X+b_X)$ for all $X\in \LLL$. Observe that 
if  $\mu (B)=\nu (B^c)=1$ for some $B$, then $B\notin \{\emptyset,\Omega\}$ and 
we have $a_X=a$ i $b_X=b$ for 
$X=b\1 _B+a\1 _{B^c}$ with  any $a\le 0\le b$, and so $\eqref{Jensen}$  holds if and only if $f$ is weakly superadditive. 

Suppose it is not true that $\mu (B)=1$ and $\nu (B^c)=1$ for some $B$. Therefore, it is impossible that  $a_X<0$ and $b_X>0,$ and so $\eqref{Jensen}$ is valid for any $f$ as it has the form $f(a_X)\le f(a_X)$ or $f(b_X)\le f(b_X).$   
\end{proof}

As far as we know, the class of weak superadditive functions has not been 
examined in the literature so far. Recall that $f$ is {\it superadditive} if 
$f(x)+f(y)\le f(x+y)$ for all $x,y\in \R$.
Clearly, each superadditive function is weakly superadditive, but not vice versa. 
For instance, let  $h(x)$ be any increasing function for $x<0$ and let $h(x)=0$ for  $x\ge 0$. 
The function $f(x)=x+h(x)$ jest weakly superadditive, but it does not have to be  
superadditive.  
The function $f(x)=x-\sqrt{(-x)_+}$  is weakly superadditive, but is not concave
and  $f(x)=\ln (1+x)$ is not weakly superadditive and concave. Still, there exist weakly superadditive and concave functions, e.g.   
$f(x)=1-e^{-x}$ or $f(x)=x-(-x)_+.$  
Note that each of the functions given above is  increasing, continuous and vanishes at zero. 

\begin{uwaga}\emph{
An equality holds in $\eqref{Jensen}$ if  
\begin{itemize}
\item $\mu,\nu$ are arbitrary capacities and $X(\omega)=c$ for some $c\in I$ and for all $\omega$; \\
\item $f(x)=ax$ or $f(x)=b$ for all $x\in I$, where $a\ge 0$ and $b\in I$; 
\item $\mu=\overline {\nu}$ and $f(x)=ax+b$ for $a,b\in \R.$
\end{itemize}
}  
\end{uwaga}

Theorem  \ref{tw2} shows that even if $\mu=\overline{\nu},$ this not guarantee that from the Jensen inequality 
$\CC(f(X))\le f(\CC(X))$  it follows that  $f$ is a concave function.  
In fact, let us take  $\mu(A)=1$ for all $A\neq\emptyset$  
and  $\nu(A)=0$ for each $A\neq \Omega.$ It is obvious that  
$\mu=\overline{\nu }.$ Since it is not true that there exists $B$ such that  
$\mu (B)=1$ and $\nu (B^c)=1,$  Theorem $\ref{tw2}$ implies that the Jensen inequality also holds for nonconcave functions $f.$
Therefore, we have to add an extra assumption on capacities $\mu,\nu$.
 
\begin{tw}\label{tw3}
Suppose that $\mu\le \overline{\nu}$ and there exists a measurable set $B$ such that 
$\mu (B)>0$ and $\nu (B^c)>0.$ 
Given an increasing and continuous function $f\colon I\to\R$ such that $f(0)=0,$ if $\CC(f(X))\le f(\CC(X))$ for all $X\in \LLL$ that takes two values, then $f$ is concave.
\end{tw}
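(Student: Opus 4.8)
The plan is to test the hypothesis only on the two-valued functions built from the distinguished set $B$, and to read off from them a single-ratio Jensen inequality on each side of $0$. Write $p=\mu(B)$ and $q=\nu(B^c)$; the standing assumptions give $p,q>0$, while $\mu\le\overline{\nu}$ forces $p+q=\mu(B)+\nu(B^c)\le 1$, so in fact $p,q\in(0,1)$. Note also that $\mu(B)>0$ and $\nu(B^c)>0$ make $B$ a proper nonempty set, so $X=b\1_B+a\1_{B^c}$ genuinely takes two values when $a\neq b$. For $a\le b$ I would feed $X=b\1_B+a\1_{B^c}\in\LLL$ into $\CC$ and evaluate the two integrals in \eqref{choquet} by splitting them at the relevant breakpoints, distinguishing the sign patterns $0\le a\le b$, $a\le b\le 0$ and $a\le 0\le b$; in each case the resulting value is a combination of $a,b$ lying in $[a,b]\subseteq I$, so indeed $X\in\LLL$. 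Since $f$ is increasing with $f(0)=0$, the function $f(X)=f(b)\1_B+f(a)\1_{B^c}$ obeys the \emph{same} sign pattern, so $\CC(f(X))$ comes from the same formula with $a,b$ replaced by $f(a),f(b)$. This yields, respectively,
\begin{align*}
(1-p)f(a)+pf(b)&\le f\big((1-p)a+pb\big),\\
qf(a)+(1-q)f(b)&\le f\big(qa+(1-q)b\big),\\
pf(b)+qf(a)&\le f\big(pb+qa\big).
\end{align*}

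The first inequality says that $f$ is Jensen-concave with the single fixed ratio $p\in(0,1)$ on $I\cap[0,\infty)$, and the second that it is Jensen-concave with the single ratio $q\in(0,1)$ on $I\cap(-\infty,0]$. At this point I would invoke the classical theorem (see \cite{kuczma}) that a continuous function satisfying a one-ratio Jensen inequality for some fixed $t\in(0,1)$ is automatically concave: the admissible ratios generated by iterating such an inequality are dense in $[0,1]$, and continuity upgrades the dense family to the whole of $[0,1]$. Applying this separately on the two half-intervals gives that $f$ is concave on $I\cap[0,\infty)$ and on $I\cap(-\infty,0]$.

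It remains to rule out a convex corner at $0$, and this is where the third inequality enters. Choosing $b=\tfrac{q}{p}(-a)$ with $a<0$ makes $pb+qa=0$, so the straddling inequality collapses to $pf\big(\tfrac{q}{p}(-a)\big)+qf(a)\le f(0)=0$. Dividing by $-a>0$ and letting $a\to 0^-$, and using that one-sided concavity guarantees the existence of the limits $s_+=\lim_{x\to0^+}f(x)/x$ and $s_-=\lim_{x\to0^-}f(x)/x$, I obtain $qs_+\le qs_-$, that is $f'_+(0)=s_+\le s_-=f'_-(0)$; the very same passage to the limit also forbids $s_+=+\infty$ and $s_-=-\infty$, so both one-sided derivatives are finite. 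Concavity on each half together with $s_+\le s_-$ rules out a convex corner at $0$, and therefore $f$ is concave on all of $I$.

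The main obstacle I anticipate is precisely this gluing step. Unlike the two one-sided inequalities, the straddling inequality carries weights $p,q$ with $p+q\le 1$ rather than $p+q=1$, so it is not itself a concavity statement and cannot be fed to the single-ratio theorem; its only clean use is along the line $pb+qa=0$, after which it must be combined with the already-established one-sided concavity and a limiting argument. Some care is needed to ensure the one-sided derivatives at $0$ are finite, but, reassuringly, the same inequality that yields the corner condition also excludes $s_+=+\infty$ and $s_-=-\infty$, so no extra regularity assumption on $f$ is required.
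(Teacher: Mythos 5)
Your proposal is correct and follows essentially the same route as the paper's own proof: the same two-valued test functions built from $B$, the same three inequalities with ratios $p=\mu(B)$ and $q=\nu(B^c)$, the same appeal to the classical fact that a continuous function Jensen-concave at a single fixed ratio is concave, and the same gluing step via the substitution $b=\tfrac{q}{p}(-a)$ followed by a limit as $a\to 0^-$ to compare $f'_+(0)$ and $f'_-(0)$. Your only departure is the explicit verification that the one-sided derivatives at $0$ are finite (the paper simply asserts their existence from one-sided concavity), a small but welcome refinement of the identical argument.
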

\begin{proof} Set  $p=\mu(B)$  and $q=\nu(B^c).$ Then
$0<p\le \overline{\nu }(B)=1-q<1,$ so 
$p,q\in (0,1).$  Put  $X=a+(b-a)\1_B,$ where  $\1_B$ denotes the indicator function of set $B.$  As  $f$ is increasing and $f(0)=0$, we have for $0\le a<b$  
\begin{align*}
\CC(f(X))=\int _0^{f(b)}\mu(f(X)>t)dt=f(a)(1-p)+f(b)p.
\end{align*} 
Clearly,  
$\CC(X)=a(1-p)+bp,$ so from the Jensen inequality we get 
$f(a)(1-p)+f(b)p\le f(a(1-p)+bp)$ for  $0\le a<b$ with fixed value of  $p\in (0,1)$. The function $-f$ is $p$-convex (see 
$\cite[\hbox{p.53}]{proschan}$). Hence, 
$-f$ is $J$-convex (see $\cite{kuhn}$ and also $\cite{daroczy}$
for an elementary proof). Since $f$ is continuous, $f$ is concave for $x\ge 0$ (cf. $\cite[\hbox{p}.133]{kuczma}$). 

For $a<b\le 0$, we have  
\begin{align*}
\CC(f(X))=-\int _{f(a)}^0\nu(f(X)<t)dt=f(a)q+f(b)(1-q).
\end{align*}
By the Jensen inequality, we have  $f(a)q+f(b)(1-q)\le f(aq+b(1-q)),$ so 
$f$ is concave for $x\le 0$.  

If $a<0<b,$ then  
$C _{\mu}(f(X))=f(a)q+f(b)p$, so the Jensen inequality is of the form  
\begin{align}\label{for0}
f(a)q+f(b)p\le f(aq+bp).
\end{align}
As $f$ is continuous and concave for  $x\le 0$  and  for $x\ge 0$, there exist the left-hand side derivative $f'_-(0)$ and the right-hand side derivative $f'_+(0).$  
Substituting $b=-aq/p$ in \eqref{for0}, dividing both sides of $\eqref{for0}$ by $a<0$ and taking the limit as $a$ goes to zero, we obtain 
the inequality  $f'_-(0)\ge f'_+(0),$ so the function $f$ is concave on the interval $I.$
\end{proof}

\begin{uwaga}\emph{
Note that the assumptions of Theorem $\ref{tw3}$ imply that neither $\mu$ nor $\nu$ is a $\{0,1\}$-valued capacitity (see the proof of Theorem $\ref{tw3}$). As we showed above, we cannot omit the assumptions about the existence of such set $B$  
that $\mu(B)>0$ and $\nu(B^c)>0$.
}
\end{uwaga}

We now give a counterpart of Theorems \ref{tw1} and \ref{tw3} under the assumption that the inequality $\eqref{Jensen}$ holds only for nonnegative functions $X.$  

\begin{tw}\label{tw4}
Suppose that  
$f\colon I\to\R$ is 
an increasing and continuous function  and $f(0)=0.$ 
Assume there exists $B$ such that  
$0<\mu (B)<1$. Then 
the inequality $\eqref{Jensen}$  holds for all $X\in L_{\mu\nu}^{[0,\infty)}$ if and only if $f$ is concave for $x\ge 0.$  
Moreover, if $\mu $ is a $\{0,1\}$-valued capacity, then $\eqref{Jensen}$ is satisfied for any $X\in L_{\mu\nu}^{[0,\infty)}$ without any additional assumption on $f.$
\end{tw}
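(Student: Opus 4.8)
The plan is to exploit the fact that restricting to nonnegative $X$ annihilates the lower integral in \eqref{choquet}, so that only the capacity $\mu$ is relevant. Indeed, if $X\ge 0$ then $\{X<t\}=\emptyset$ for every $t\le 0$, whence $\CC(X)=\int_0^\infty\mu(X>t)\,dt$; and since $f$ is increasing with $f(0)=0$ we also have $f(X)\ge 0$, so $\CC(f(X))=\int_0^\infty\mu(f(X)>t)\,dt$. Thus the statement reduces to a Jensen inequality for the one-capacity Choquet integral, and I would treat sufficiency, necessity, and the $\{0,1\}$-valued case separately, adapting the arguments of Theorems \ref{tw1}, \ref{tw3} and \ref{tw2} respectively.

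For sufficiency I would first record the exact affine identity
\[
\CC(a+cX)=a+c\,\CC(X)\qquad(a\ge 0,\ c\ge 0,\ X\ge 0),
\]
which is precisely what lets the tangent-line argument run \emph{without} the hypothesis $\mu\le\overline\nu$ used in Theorem \ref{tw1}. It follows from the substitution $s=t-a$ together with the observation that $\{cX>s\}=\Omega$ for $s\in(-a,0)$ (because $cX\ge 0$), so the contribution of that strip is exactly $a$, while $\int_0^\infty\mu(cX>s)\,ds=c\,\CC(X)$ by C3. With this in hand I would mimic the proof of Theorem \ref{tw1}: concavity of $f$ on $[0,\infty)$ gives $f(y)\le f(x)+f'(x)(y-x)$ for all $y\ge 0$ and $x\ge 0$ (using the right derivative, which exists and is finite on $[0,\sup I)$); the choice $y=0$ yields $f(x)-xf'(x)\ge 0$, and $f'(x)\ge 0$ by monotonicity; then C2, C3 and the affine identity give $\CC(f(X))\le f(x)-xf'(x)+f'(x)\CC(X)$, and substituting $x=\CC(X)\ge 0$ produces \eqref{Jensen}. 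The only point needing care is the boundary case $\CC(X)=0$, handled by the one-sided derivative $f'_+(0)$.

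For necessity I would reuse the two-valued test function of Theorem \ref{tw3}: with $B$ such that $p:=\mu(B)\in(0,1)$ and $X=a+(b-a)\1_B$ for $0\le a<b$ (so $X\ge 0$), a direct computation of the upper integral gives $\CC(f(X))=(1-p)f(a)+pf(b)$ and $\CC(X)=(1-p)a+pb$. Hence \eqref{Jensen} becomes $(1-p)f(a)+pf(b)\le f\bigl((1-p)a+pb\bigr)$ for all $0\le a<b$, i.e. $-f$ is $p$-convex on $[0,\infty)$; as in Theorem \ref{tw3} this forces $-f$ to be $J$-convex, and continuity of $f$ then yields concavity of $f$ for $x\ge 0$. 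The single assumption $0<\mu(B)<1$ supplies exactly the weight $p\in(0,1)$ needed here, and no condition on $\nu$ enters.

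Finally, for the $\{0,1\}$-valued case I would argue as in Theorem \ref{tw2}: for $X\ge 0$ the decreasing $\{0,1\}$-valued function $t\mapsto\mu(X>t)$ integrates to its threshold $b_X=\inf\{t\ge 0\colon\mu(X>t)=0\}$, so $\CC(X)=b_X$, and the same continuity/monotonicity computation gives $b_{f(X)}=f(b_X)$, whence $\CC(f(X))=f(b_X)=f(\CC(X))$; thus \eqref{Jensen} holds with equality for every admissible $f$. I expect the crux to be not any single hard estimate but the sufficiency step—specifically, verifying the affine identity and the derivative at the boundary point $0$—since it is exactly this that replaces the role of $\mu\le\overline\nu$ and makes the nonnegative case go through under the weaker hypotheses.
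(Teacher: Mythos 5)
Your proposal is correct and is essentially the paper's own (omitted) proof: the authors state only that the argument ``follows almost exactly as for Theorems \ref{tw1} and \ref{tw3},'' and your write-up is precisely that adaptation --- the reduction to the upper integral for $X\ge 0$, the exact identity $\CC(a+cX)=a+c\,\CC(X)$ for $a,c\ge 0$ (which correctly explains why the hypothesis $\mu\le\overline{\nu}$ drops out), the two-valued test functions giving $p$-convexity of $-f$, and the Theorem \ref{tw2}-style equality $\CC(f(X))=f(b_X)=f(\CC(X))$ in the $\{0,1\}$-valued case. One small caution: at $\CC(X)=0$ the derivative $f'_+(0)$ may be $+\infty$ (e.g.\ $f$ behaving like $\sqrt{x}$ near $0^+$), so rather than substituting $x=0$ one should let $x\downarrow 0$ in $\CC(f(X))\le f(x)-xf'(x)$ and use $0\le f(x)-xf'(x)\le f(x)\to 0$, which closes the boundary case.
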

\begin{proof} The proof follows almost exactly as for Theorems \ref{tw1} and \ref{tw3}, so we omit it.  
\end{proof}

In Theorems  \ref{tw1}-\ref{tw4} we restrict our attention  to the case of an increasing and continuous function $f.$ An extension to a wider class of function is a difficult task  due to the property  $C3$ of the generalized Choquet integral. To the best of our knowledge, the only result in this direction  is that of  Girotto and Holtzer $\cite[\hbox{Theorem 2.5}]{Girotto},$ where a  Jensen type inequality for the integral $C_\mu$  was obtained.

\section{Application}
\label{sek2}

Consider an agent which has a nonnegative reference point $w$, e.g. an initial wealth measured in monetary terms.
The agent faces a random outcome $X$ with known distribution. In this
background, $X$ may be a random variable which takes both negative and positive values, which means
that there is a possibility of yielding a gain from investition.   
If the agent  decides to buy a contract for some premium $\pi$ 
and the outcome  $X$ occurs, then it will be reimbursed  and it does not influence the wealth. On the other hand,
if she decides not to buy the contract, her wealth is $w-X$. 

According to the von Neumann-Morgenstern theory, preferences of the agent can be described by a
continuous and strictly increasing utility function $u\colon I\to \R$ with $u(0)=0$. The most common
examples are: the {\it exponential utility function} $u(x)=1-e^{-ax}$, $a>0$;
the {\it power utility function} $u(x)=(x+a)^b-a^b$, $a\ge 0$, $b>0$;
the {\it logarithmic utility function} $u(x)=\ln ((x+a)/a)$, $a\geq 1$;
the {\it power-expo utility function} $u(x)=1-\exp(-bx^c)$, 
$b,c>0$ (see \cite{munk}).

Pratt $\cite{pratt}$ suggested to use the equivalent utility principle to
find the maximum premium $\pi$ which the agent is willing to pay. Then $\pi$
is the solution of the following equation 
\begin{align}\label{skladka}
u(w- \pi)=\mathbb{E} u(w-X).
\end{align}
We say that the agent is risk averse, if she is willing to pay more than $%
\mathbb{E} X$ for an insurance contract paying out the monetary equivalent
of a random outcome $X$, regardless her initial wealth $w$. One may prove
that the  agent is risk averse if and only if the function $u$ is concave,
that is, $u(ax+(1-a)y)\ge au(x)+(1-a)u(y)$ for all $x,y$ and $0<a<1$.
Moreover, if two agent have the same initial wealth and the $i$-th agent
has twice differentiable utility function $u_i,$ $i=1,2$, then the first of
them is more risk averse (wants to pay not less than the other agent), if
and only if $r_{u_1}(x)\ge r_{u_2}(x)$ for all $x$, where $r_{u_i}(x)$ is the \textit{coefficient of absolute
risk aversion} of the $i$-th agent defined as 
\begin{align}  \label{equation}
r_u(x)=- \frac{u^{\prime \prime }(x)}{u^{\prime }(x)}
\end{align}
for $u\in\{u_1,u_2\}.$ It is also called the \textit{Arrow-Pratt index}.
Similar results were obtained independently by Arrow $\cite{arrow}$ and de
Finetti $\cite{monte}$. It is worth to note that in many books
and papers, their authors give only a heuristic explanation of this result
based on the approximation of premium $\pi$ by using the Taylor formula, see
e.g. $\cite{ek,pena}$. Precise proofs can be found in $\cite%
{follmer,munk,pratt}.$ 
The studies on risk aversion were pursued by many researchers, see $\cite%
{follmer, golier, Machina,munk}$ among others.

The aim of the paper is to generalize the two aforementioned results by
replacing of the expected value with the generalized Choquet integral. 
Let $\pi_u(X,w)$ denote the premium determined from the following formula 
\begin{align}  \label{uzytecznosc}
u\bigl(w- \pi _u(X,w)\bigr)=\CC(u(w-X)),
\end{align}
where $X$ is a  financial outcome $X$ and $\CC(X)$  
is the Choquet integral with respect to capacities $\mu,\nu$.
Since the function $u$
is strictly increasing and continuous, 
the premium $\pi _u(X,w)$ exists and is determined uniquely if $w-X\in \X_u$, where $\X_u$ denotes the set of such measurable functions $X\colon \Omega \to \R$ that $\CC (X)\in I$ and $\CC(u(X))\in u(I)$.
The premium $\pi _u$ was proposed in $\cite{ka1,ka4}$ in case of
the capacities being Kahneman-Tverski distorted measures (see the Appendix, Example \ref{ex2}). A lot of properties of that premium was studied, but the problem of risk aversion measure was not examined.  

Now, we are ready to extend  the Arrow-Pratt result.
We say that an
agent is \textit{risk neutral}, if the utility of her money is measured by
the means of its value, that is, if her utility function is $u_0(x)=x$ for
all $x\in \mathbb{R}$. We say that an agent with utility function $u\colon I\to \R$ is \textit{risk averse}, if for all $w \ge 0$ and $X$  such that $w-X \in 
\X_u$ we have 
\begin{align}  \label{awersja0}
\pi _u(X,w) \ge \pi _0:=\CCC(X)+\int _0^w\bigl(\overline{\nu}(X<s)-\mu(X<s)\bigr)ds,
\end{align} 
where $\pi _0$ is the premium of a risk neutral agent.

\begin{tw}
\label{tw5} An agent with
a concave utility function is  
risk averse if and only if $\mu\le \overline{\nu}.$ 
Moreover, if an agent is risk averse, $\mu\le \overline{\nu}$ and 
$\mu (B),\nu(B^c)>0$ for some $B$, then 
$u$ is a concave function. 
\end{tw}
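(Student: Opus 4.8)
The plan is to reduce both assertions to Theorems \ref{tw1} and \ref{tw3} by exploiting the strict monotonicity of $u$. The first thing I would record is that the risk-neutral premium $\pi_0$ in \eqref{awersja0} is nothing but the certainty equivalent of a risk-neutral agent: feeding $u_0(x)=x$ into the defining relation \eqref{uzytecznosc} gives $w-\pi_0=\CC(w-X)$, and substituting $-X$ into properties C3 and C4 of the Lemma recovers the explicit formula \eqref{awersja0}. Consequently, writing $Y=w-X$, relation \eqref{uzytecznosc} reads $u\bigl(w-\pi_u(X,w)\bigr)=\CC(u(Y))$, while $u(w-\pi_0)=u(\CC(Y))$.

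Next I would turn the definition of risk aversion into a Jensen inequality. Since $u$ is strictly increasing, the inequality $\pi_u(X,w)\ge\pi_0$ is equivalent to $w-\pi_u(X,w)\le w-\pi_0$, and applying $u$ to both sides yields
\[
\CC(u(Y))\le u(\CC(Y)).
\]
Hence the agent is risk averse exactly when \eqref{Jensen} holds for $f=u$ at every $Y=w-X$ with $w\ge0$ and $w-X\in\X_u$. Taking $w=0$ and $X=-Y$ shows these $Y$ exhaust $\X_u$, so risk aversion is equivalent to \eqref{Jensen} holding for all $Y\in\X_u$.

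With this equivalence in hand, I would invoke the earlier results. For the first assertion, $u$ is increasing, concave and $u(0)=0\ge0$, so Theorem \ref{tw1} applies directly: since $\X_u\subseteq\LLL$, the condition $\mu\le\overline{\nu}$ forces \eqref{Jensen} on all of $\LLL$ and hence on $\X_u$, giving risk aversion; conversely, the test functions $X=b\1_{A^c}$ with $-a<b<0$ used in the necessity part of Theorem \ref{tw1} are simple and therefore lie in $\X_u$, so risk aversion returns $\mu\le\overline{\nu}$. For the second assertion I would apply Theorem \ref{tw3}: risk aversion supplies \eqref{Jensen} for all two-valued $Y\in\X_u$, the function $u$ is increasing and continuous with $u(0)=0$, and the remaining hypotheses $\mu\le\overline{\nu}$ and $\mu(B),\nu(B^c)>0$ are precisely those of Theorem \ref{tw3}, whence $u$ is concave.

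The genuinely routine part is the calculation identifying $w-\pi_0$ with $\CC(w-X)$ through C3 and C4. The step demanding the most care, and the main obstacle, is the bookkeeping of function classes: I must verify that $\X_u\subseteq\LLL$ so that the sufficiency directions transfer, and that the specific simple test functions arising in the necessity arguments of Theorems \ref{tw1} and \ref{tw3} genuinely belong to $\X_u$, i.e. satisfy both $\CC(X)\in I$ and $\CC(u(X))\in u(I)$, so that the necessity directions transfer as well. This holds because those test functions are bounded and two-valued, which makes both generalized Choquet integrals finite and confines $\CC(u(X))$ between values of $u$ lying in the interval $u(I)$.
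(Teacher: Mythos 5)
Your reduction is exactly the paper's route: identify $w-\pi_0$ with $\CC(w-X)$ via C3 and C4, apply the strictly increasing $u$ to turn $\pi_u(X,w)\ge\pi_0$ into the Jensen inequality \eqref{kk}, observe that the variables $Y=w-X$ range over all of $\X_u$, and then invoke Theorems \ref{tw1} and \ref{tw3}. The sufficiency direction (if $\mu\le\overline{\nu}$ then a concave agent is risk averse) and the second assertion (risk aversion plus $\mu\le\overline{\nu}$ and $\mu(B),\nu(B^c)>0$ forces concavity of $u$, via the two-valued test functions of Theorem \ref{tw3}, which do lie in $\X_u$) transfer correctly, and your attention to the class bookkeeping $\X_u\subseteq\LLL$ is a genuine improvement on the paper's one-line proof.

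However, your justification of the necessity direction of the first assertion contains a real error. You claim that the test functions $X=b\1_{A^c}$, $-a<b<0$, from the necessity part of Theorem \ref{tw1} "return $\mu\le\overline{\nu}$" once they are seen to lie in $\X_u$. But in Theorem \ref{tw1} those test functions are substituted into \eqref{w011}, and \eqref{w011} is derived from condition (B), which in turn comes from applying \eqref{Jensen} to the affine maps $f(x)=a+x$ with $a>0$. These maps have $f(0)=a>0$ and so are \emph{not} admissible utility functions (the paper requires $u(0)=0$); risk aversion of the given agent supplies Jensen only for the single fixed $u$, not condition (B). Substituting your test functions into \eqref{Jensen} for that fixed $u$ yields $\nu(A^c)\,u(b)\le u\bigl(\nu(A^c)\,b\bigr)$, which is automatic from concavity through the origin and carries no information about $\mu$ versus $\overline{\nu}$. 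Indeed the obstruction is not repairable along this route: for any concave $u$ with $u(0)=0$, the chord inequality $b\,u(a)\le a\,u(b)$ for $a\le 0\le b$ gives $q\,u(a)+p\,u(b)\le u(qa+pb)$ for \emph{all} $p,q\in[0,1]$, so two-valued tests never detect a failure of $\mu\le\overline{\nu}$; concretely, for $\mu(A)=\nu(A)=1$ for all $A\neq\emptyset$, Theorem \ref{tw2} together with the weak superadditivity of concave $u$ (which follows from the same chord inequality) makes every concave agent risk averse even though $\mu\le\overline{\nu}$ fails. To be fair, the paper's own proof ("a simple consequence of Theorems \ref{tw1} and \ref{tw3}") glosses over precisely this quantifier mismatch — the necessity in Theorem \ref{tw1} needs functions with $f(0)>0$, which the risk-aversion setting does not provide — so you have reproduced the paper's argument together with its lacuna, but your explicit claim about how necessity follows is incorrect as written.
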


\begin{proof} Of course, $\pi _u(X,w)=w-u^{-1}(\CC(u(w-X)))$. An agent with the function $u$ is risk averse if and only if  for all $w\ge 0$ such that  $w-X\in \X_u$ we have
\begin{align}\label{kk}
\CC(u(w-X)) \leq u(\CC(w-X)).
\end{align}  
The desired result is now a simple consequence of Theorems \ref{tw1} and \ref{tw3}. 
\end{proof}

\begin{wniosek} Assume that an agent with an utility function $u$ uses the Choquet integral $C_\mu$
to evaluation and $\mu$ is not a $\{0,1\}$-valued capacity. Then 
the agent is  
risk averse if and only if $u$ is a concave function. 
\end{wniosek}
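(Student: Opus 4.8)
The plan is to specialize Theorem \ref{tw5} to the Choquet integral, for which $\nu=\overline{\mu}$ by the definition $C_\mu:=C_{\mu\overline{\mu}}$. My first observation is that the condition $\mu\le\overline{\nu}$ appearing throughout Theorem \ref{tw5} becomes $\mu\le\overline{\overline{\mu}}=\mu$, which holds trivially (indeed with equality, since conjugation is involutive). This is precisely the remark recorded after Theorem \ref{tw1}, that $\mu\le\overline{\nu}$ is automatic for the Choquet integral. Consequently the hypothesis $\mu\le\overline{\nu}$ may be discarded from every assertion I invoke.

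For the ``if'' direction I would simply apply the first assertion of Theorem \ref{tw5}: because $\mu\le\overline{\nu}$ holds, an agent with a concave utility function is risk averse. No further work is needed in this direction.

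For the ``only if'' direction the idea is to check the extra hypothesis of the ``moreover'' part of Theorem \ref{tw5}, namely the existence of a set $B$ with $\mu(B)>0$ and $\nu(B^c)>0$. This is exactly where the assumption that $\mu$ is not a $\{0,1\}$-valued capacity enters: by definition there is a measurable set $B$ with $\mu(B)\notin\{0,1\}$, that is $0<\mu(B)<1$. Using $\nu=\overline{\mu}$, I compute $\nu(B^c)=\overline{\mu}(B^c)=1-\mu(B)>0$, while $\mu(B)>0$ holds by the choice of $B$. Thus both strict-positivity conditions are met; since the agent is assumed risk averse and $\mu\le\overline{\nu}$ holds automatically, the ``moreover'' part of Theorem \ref{tw5} delivers the concavity of $u$. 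The standing hypotheses on $u$ (continuous, strictly increasing, $u(0)=0$) are exactly those demanded by Theorem \ref{tw3}, on which this part of Theorem \ref{tw5} rests, so they are already in force.

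As the statement is a direct corollary of Theorem \ref{tw5}, I do not expect a genuine obstacle. The only point requiring care is the translation between ``$\mu$ is not $\{0,1\}$-valued'' and the existence of the witnessing set $B$, together with the identity $\nu(B^c)=1-\mu(B)$, which renders the second positivity condition automatic once $\mu(B)<1$ is secured.
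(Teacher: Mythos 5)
Your proposal is correct and follows exactly the derivation the paper intends: the corollary is stated without proof as an immediate consequence of Theorem \ref{tw5}, and your specialization --- noting that $\nu=\overline{\mu}$ makes $\mu\le\overline{\nu}$ automatic, and that $0<\mu(B)<1$ for some $B$ yields $\mu(B)>0$ and $\nu(B^c)=1-\mu(B)>0$ --- supplies precisely the missing routine steps. No gaps; the translation between ``not $\{0,1\}$-valued'' and the witnessing set $B$ is handled correctly.
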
 

Given $u,v\colon I\to \R$ such that $v(I)\subset I$, we say that an agent with the utility function $u$ is \textit{more risk averse} than
an agent with utility function $v$, if for all $w\ge 0$ such that $w-X\in \X_u\cap \X _v,$ we have 
\begin{align}
\pi _u(X,w) \geq \pi _v(X,w).
\end{align}
Theorem \ref{tw6} provides a
generalization of the Arrow-Pratt Theorem in the case, when $\mu$ and $\nu$  are capacities. 
Recall that $r_u(x)$  denotes the coefficient of
the absolute risk aversion $\eqref{equation}$.

\begin{tw}
\label{tw6} Assume that $\mu\le \overline{\nu}$ and $\mu(B),\nu (B^c)>0$ for some $B$. 
Let $r_u$ and $r_v$ be the coefficients of the absolute risk
aversion of concave  and twice differentiable utility functions $u,v$, respectively. The following conditions are
equivalent:
\begin{description}
\item $(i)$ an agent with the utility function $u$ is more risk averse than
an agent with utility function $v$;
\item $(ii)$ $u=g \circ v$ for some strictly increasing and concave function $g$;
\item $(iii)$ $r_u(x) \geq r_v(x)$ for all $x\in I$.
\end{description}
\end{tw}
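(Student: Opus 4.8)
The plan is to funnel both nontrivial equivalences through the composed function $g:=u\circ v^{-1}$: the equivalence $(i)\Leftrightarrow(ii)$ will reduce to the Jensen inequality for $g$, which Theorems \ref{tw1} and \ref{tw3} already characterize, while $(ii)\Leftrightarrow(iii)$ is a direct two-fold differentiation.

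First I would put the premium in closed form. Since $u$ is continuous and strictly increasing, $\pi_u(X,w)=w-u^{-1}\bigl(\CC(u(w-X))\bigr)$, and similarly for $v$. Writing $Y=w-X$, the comparison simplifies to $\pi_u(X,w)-\pi_v(X,w)=v^{-1}\bigl(\CC(v(Y))\bigr)-u^{-1}\bigl(\CC(u(Y))\bigr)$, which no longer depends on $w$. Hence $(i)$ is equivalent to $u^{-1}\bigl(\CC(u(Y))\bigr)\le v^{-1}\bigl(\CC(v(Y))\bigr)$ for all admissible $Y$ with values in $I$. Substituting $Z=v(Y)$ and using $u=g\circ v$, $u^{-1}=v^{-1}\circ g^{-1}$, the inequality collapses, after applying the increasing maps $v$ and then $g$, to the Jensen inequality $\CC(g(Z))\le g(\CC(Z))$ for $g$ over the interval $v(I)$.

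With this reduction in hand, $(i)\Leftrightarrow(ii)$ follows from the earlier results. The map $g=u\circ v^{-1}$ is automatically strictly increasing and continuous with $g(0)=u(v^{-1}(0))=u(0)=0$. If $g$ is concave, then Theorem \ref{tw1} (applicable because $\mu\le\overline{\nu}$) yields the Jensen inequality, hence $(i)$. Conversely, if $(i)$ holds then the Jensen inequality holds for $g$ in particular for all two-valued $Z$, and Theorem \ref{tw3} (applicable because $\mu\le\overline{\nu}$ and $\mu(B),\nu(B^c)>0$) forces $g$ to be concave, giving $(ii)$. For $(ii)\Leftrightarrow(iii)$ I would differentiate $u=g\circ v$: from $u'=(g'\circ v)\,v'$ and $u''=(g''\circ v)(v')^2+(g'\circ v)\,v''$ one obtains
\[
r_u(x)-r_v(x)=-\frac{g''(v(x))\,v'(x)}{g'(v(x))}.
\]
Because $v$ and $g$ are strictly increasing, $v'>0$ and $g'>0$, so the right-hand side is nonnegative for all $x$ exactly when $g''\le0$ on $v(I)$, i.e. exactly when $g$ is concave; since $v$ maps $I$ onto the interval $v(I)$, this pointwise condition is equivalent to concavity of $g$.

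The main obstacle is the domain bookkeeping hidden in the reduction. I must check that $v(I)$ is again an open interval containing $0$ (it is, as the image of $I$ under a continuous strictly increasing map, and $v(0)=0$), that $\CC(Z)\in v(I)$ for the relevant $Z$ so the Jensen machinery applies with $v(I)$ in place of $I$, and, crucially for the converse $(i)\Rightarrow(ii)$, that as $Y$ ranges over the admissible outcomes the transformed variable $Z=v(Y)$ sweeps out \emph{all} two-valued functions valued in $v(I)$, so that the hypothesis of Theorem \ref{tw3} is genuinely met. The last point is handled by taking, for any two-valued $Z=\beta\1_B+\alpha\1_{B^c}$ with $\alpha,\beta\in v(I)$, the preimage $Y=v^{-1}(\beta)\1_B+v^{-1}(\alpha)\1_{B^c}$ and verifying $\CC(Y)\in I$ and $\CC(v(Y))\in v(I)$ using that $p+q\le1$ (as $p=\mu(B)\le\overline{\nu}(B)=1-q$), which keeps the two-valued integrals inside the relevant intervals. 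Confirming these inclusions, together with the genuine cancellation of the $w$-dependence, is where the care lies; the remaining differentiation is routine.
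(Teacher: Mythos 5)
Your proposal is correct and follows essentially the same route as the paper: both funnel $(i)\Leftrightarrow(ii)$ through the Jensen inequality for $g=u\circ v^{-1}$ on $v(I)$ via Theorems \ref{tw1} and \ref{tw3}, and both settle $(ii)\Leftrightarrow(iii)$ by a two-fold differentiation (you differentiate $u=g\circ v$ to get $r_u(x)-r_v(x)=-g''(v(x))v'(x)/g'(v(x))$, the paper equivalently differentiates $g=u\circ v^{-1}$). Your explicit verification that every two-valued $Z$ with values in $v(I)$ arises from an admissible $Y\in\X_u\cap\X_v$ is a point the paper's proof leaves implicit, so your write-up is, if anything, slightly more careful.
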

\begin{proof}
$\left( i\right) \Rightarrow \left( ii\right)$:
From  the assumption $\pi_u(X,w) \geq \pi_v(X,w)$ it follows that 
for all $w-X\in \X_u\cap \X_v$ we have
\begin{align}\label{pon}
\CC(g (Y)) \leq g(\CC (Y)),
\end{align}
where $g=u \circ v^{-1}$ and $Y=v(w-X).$ 
From Theorem \ref{tw3} we conclude that $g$ is concave.
\\\medskip 
$\left( ii\right) \Rightarrow \left( i\right)$:
Let $u=g\circ v$. Then,
from the concavity of $g$ and from Theorem \ref{tw1}, for all $Y\in \LL ^{v(I)}$ we get
\begin{align}
u(w-\pi _u(X,w))&=\CC(g(Y)) \nonumber \\ &\le g(\CC(Y)) =u(w-\pi _v(X,w)),
\end{align}
thus $\pi _u(X,w)\ge \pi _v(X,w).$
\\ $\left( ii\right) \Leftrightarrow \left( iii\right)$:
Let $g=u \circ v^{-1}$. Function $g$ is increasing and twice differentiable, because it is a composition of functions $u$ and $v^{-1}$. Hence $g$ is concave if and only if 
$g^{\prime\prime} (x)\leq 0$ for $x\in v(I)$. Since for all $x$ we have
\begin{align}
g^{\prime\prime}(x)=-\frac{u^\prime(v^{-1}(x))}{[v^\prime(v^{-1}(x))]^2}(r_u(v^{-1}(x))-r_v(v^{-1}(x))) \leq 0,
\end{align}
so $g$ is concave.
\end{proof}

Note that our proof is different from those in $\cite{follmer,munk,pratt},$
where  some extra unnecessary
assumptions are added, e.g. about the continuity of the Arrow-Pratt coefficient or the strict convexity of the utility function. 

It was Georgescu and Kinnunen $\cite{geor-kinu}$ and Zhou et al. $\cite{zhou}$ who first analyze the risk aversion
within the framework of the Liu uncertainty theory (see $\cite{bliu11})$.
They introduce the
notions of uncertain expected utility, uncertain risk premium and give a
counterpart of the Arrow-Pratt theorem under the uncertainty theory.
The heuristic justification of the Arrow-Pratt Theorem was based on the
Taylor approximation of premium.
Theorems $\ref{tw5}$ and $\ref{tw6}$ generalize the results of $\cite{geor-kinu,zhou}$ for arbitrary capacities with proofs which do not use the heuristic reasonings (see Remark $\ref{aproksymacja}$).

\begin{przyklad}
Suppose that  $\mu(A)=g(P(A))$ and $\nu(A)=h(P(A))$ for each $A$, where $g,h$ are weighting  functions (see Appendix, Example \ref{ex2}). 
From many empirical research it follows that  $g$ is an $S-$shaped function, i.e. $g$ is concave on $[0,p]$ and is convex on  $[p,1]$ for some $p\in (0,1)$. Moreover, small probabilities are overestimated and high probabilities are underestimated. The function $h$ has the same form as $g$, but it has a bit different parameters. 

Tversky and Kahneman \cite{kah} propose the following weighting function
\[
g(p)=\frac{p^\gamma}{(p^\gamma+(1-p)^\gamma )^{1/\gamma}},
\]
where $\gamma\in (0.28,1)$. In the empirical research the following estimation of the parameter
$\gamma$ was obtained : $\gamma =0.61$ for  $g$,
$\gamma =0.69$ for  $h$ (see $\cite{kah}$). 
Figure 1 shows a graph of $g$ (solid line) and $\overline{h}$ (dashed line) with parametres suggested by Tversky and  Kahneman.
Since $\overline{h}\ge g$, we have  $\mu\le \overline{\nu}$. 
\begin{figure}[thb]
\begin{center}
\includegraphics[width=6 cm]{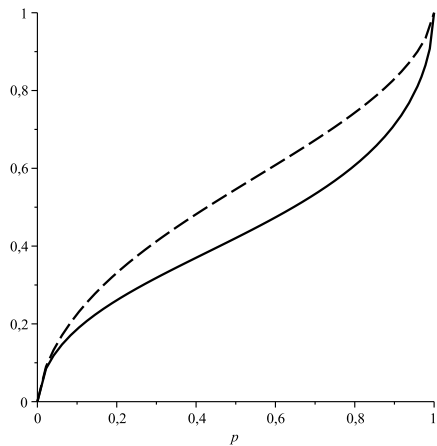}
\end{center}
\caption{The Kahneman-Tversky weighting functions}
\end{figure}

Goldstein and Einhorn \cite{Gold} introduce the function
\[
g(p)=\frac{\delta p^\gamma}{\delta p^\gamma+(1-p)^\gamma }.
\]
Abdellaoui \cite{Ab} estimated for it values in the empirical way: 
$\delta = 0.65$, 
$\gamma = 0.60$ for $g$ and  $\delta = 0.84$, $\gamma = 0.65$
for $h$. The function $g$  is depicted by a solid line and $\overline{h}$ is depicted by a dashed line in Figure 2. 
The condition $\mu\le \overline{\nu}$ is satisfied.

\begin{figure}[thb]
\begin{center}
\includegraphics[width=6 cm]{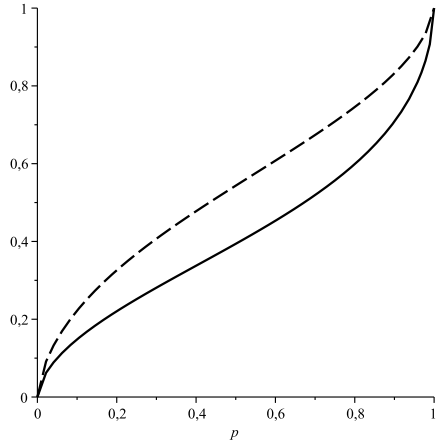}
\end{center}
\caption{The Goldstein-Einhorn weighting functions}
\end{figure}

Prelec \cite{prelec1} axiomatically derived the 
probability weighting function 
\begin{align*}
g(x)=\exp(-\delta(-\ln p)^\gamma),\quad \delta >0,\; 0<\gamma\le 1.
\end{align*}
Following Rieger and Wang \cite{prelec2}, we  use Perlec's
weighting function with the same parameters $\delta=1$ and $\gamma=0.74$ both for gains and for losses. The functions $g(x)$ and $\overline{h}(x)=1-g(1-x)$ are illustrated in Figure 3. The inequality $\mu\le \overline{\nu}$ is also valid. 

\begin{figure}[thb]
\begin{center}
\includegraphics[width=6 cm]{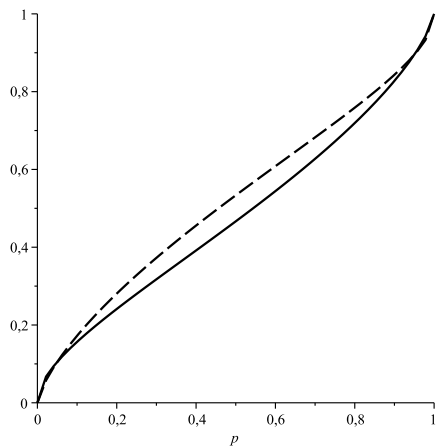}
\end{center}
\caption{The Prelec weighting functions}
\end{figure}
\end{przyklad}

\begin{uwaga}\label{aproksymacja}
\emph{The approximation is used in the classical utility theory with probability measure $\mu$  to justify the meaning of the coefficient of risk aversion 
$r_u.$ We provide similar approximation for premium $\pi _u(X,w).$ From the Taylor formula at point $w$ and from $\eqref{uzytecznosc}$
we obtain 
\begin{align*}
u(w)-u^{\prime }(w)\widehat{\pi}_u(X,w)&=\CC\bigl(u(w)+u^{\prime }(w)(-X)+%
\tfrac{1}{2}u^{\prime \prime }(w)X^2\bigr),\\
&=u(w)-u'(w)\CCC(Y)\\
&\hspace{.4cm}+\int _{-u(w)}^0\bigl(\mu (-u'(w)Y>s)-\overline{\nu}(-u'(w)Y>s)\bigr)ds,
\end{align*}
where $\widehat{\pi}_u(X,w)$ denotes an approximation of premium $\pi
_u(X,w) $ and $Y=X+r_u(w)X^2/2$.  After an elementary calculation we get the following 
\textit{de Finetti-Arrow-Pratt approximation}: 
\begin{align}  \label{appro}
\widehat{\pi}_u(X,w)=\CCC(Y)+\int _0^{u(w)/u'(w)}\bigl(\overline{\nu}(Y<t)-\mu(Y<t)\bigr)dt.
\end{align}
If an agent uses the Choquet integral  (that is, $\mu=\overline{\nu}$), then 
\begin{align}  \label{appro1}
\widehat{\pi}_u(X,w)=C_{\overline{\mu}}(Y)\ge C_{\overline{\mu}}(X)=\pi _0,
\end{align}
and the greater $r_u(w)$ measure, the greater the difference between the premium $\pi _u(X,w)$ and the premium of risk neutral agent is, because $Y=X+r_u(w)X^2/2$ and Choquet integral has the property C2. When the agent uses the symmetric Choquet integral $(\mu=\nu)$ and $\mu (A)+\mu (A^c)=1$ for all $A$ (e.g. for Liu uncertainty measure,  see Example \ref{ex5}), we have the same interpretation, because
\begin{align*} 
\widehat{\pi}_u(X,w)=\breve{C}_\mu(Y)\ge \breve{C}_\mu(X)=\pi _0.
\end{align*}
In the other cases, an interpretation of risk aversion measure via approximation \eqref{appro}  is not clear.
}
\end{uwaga}

Finally, we consider an  important special case, namely, when
the loss $X$ does not exceed agent's wealth. 

\begin{tw} Assume $\mu$ is not a $\{0,1\}$-valued capacity and $X(\omega)\le w$ for all $\omega\in \Omega$.  An agent with utility function $u$ is risk averse if and only if $u$ is concave for $x\ge 0$. Moreover, given 
concave  and twice differentiable utility functions $u$ and $v$,  an agent with the utility function $u$ is more risk averse than
an agent with utility function $v$ if and only if 
$r_u(x) \geq r_v(x)$ for all $x\in I\cap [0,\infty)$.
\end{tw}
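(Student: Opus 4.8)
The plan is to exploit the restriction $X(\omega)\le w$, which forces $w-X\ge 0$, so that the whole argument reduces to the nonnegative version of the Jensen inequality recorded in Theorem \ref{tw4} rather than to the full versions in Theorems \ref{tw1} and \ref{tw3}. For the first assertion I would start, exactly as in the proof of Theorem \ref{tw5}, from $\pi_u(X,w)=w-u^{-1}(\CC(u(w-X)))$, so that the agent is risk averse precisely when
\begin{align*}
\CC(u(w-X))\le u(\CC(w-X))
\end{align*}
holds for every admissible pair $(X,w)$ with $w\ge 0$ and $w-X\in\X_u$. Writing $Y:=w-X$, the constraint $X\le w$ says exactly that $Y\ge 0$, and conversely every nonnegative $Y\in\X_u$ is realized by taking $w=0$ and $X=-Y$. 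Thus risk aversion is equivalent to $\CC(u(Y))\le u(\CC(Y))$ for all nonnegative $Y$. Since $u$ is increasing, continuous, $u(0)=0$, and since $\mu$ is not $\{0,1\}$-valued (i.e.\ there is a $B$ with $0<\mu(B)<1$), Theorem \ref{tw4} applies and yields that this inequality holds for all such $Y$ if and only if $u$ is concave on $[0,\infty)$.

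For the second assertion I would follow the scheme of Theorem \ref{tw6}, but again invoking Theorem \ref{tw4} in place of Theorems \ref{tw1} and \ref{tw3}. Set $g=u\circ v^{-1}$ and $Y=v(w-X)$. Because $v$ is increasing with $v(0)=0$ and $w-X\ge 0$, we again have $Y\ge 0$, so the inequality characterizing ``$u$ more risk averse than $v$'', namely $\CC(g(Y))\le g(\CC(Y))$, only needs to be controlled for nonnegative $Y$. The function $g$ is increasing, continuous, twice differentiable and satisfies $g(0)=u(v^{-1}(0))=u(0)=0$; hence, with $\mu$ not $\{0,1\}$-valued, Theorem \ref{tw4} shows that ``$u$ more risk averse than $v$'' is equivalent to $g$ being concave on $[0,\infty)$.

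It then remains to translate concavity of $g$ on $[0,\infty)$ into the stated comparison of the risk-aversion coefficients $r_u$ and $r_v$. Reusing the identity from the proof of Theorem \ref{tw6},
\begin{align*}
g''(x)=-\frac{u'(v^{-1}(x))}{[v'(v^{-1}(x))]^2}\bigl(r_u(v^{-1}(x))-r_v(v^{-1}(x))\bigr),
\end{align*}
and noting $u'>0$, I would conclude that $g''(x)\le 0$ for $x\in v(I)\cap[0,\infty)$ if and only if $r_u(v^{-1}(x))\ge r_v(v^{-1}(x))$ there. The change of variable $y=v^{-1}(x)$, valid since $v$ is an increasing bijection with $v(0)=0$ (so $x\ge 0\iff y\ge 0$ and $v^{-1}$ carries $v(I)\cap[0,\infty)$ onto $I\cap[0,\infty)$), turns this into $r_u(y)\ge r_v(y)$ for all $y\in I\cap[0,\infty)$, as claimed. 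The point to handle carefully---rather than a genuine obstacle, since the statement is essentially a specialization of the earlier theorems---is the domain bookkeeping: verifying that the family $\{\,w-X:\ w\ge 0,\ X\le w,\ w-X\in\X_u\,\}$ coincides with the nonnegative elements of the class to which Theorem \ref{tw4} applies, and that the concavity window $[0,\infty)$ is preserved under the composition $g=u\circ v^{-1}$ and under the change of variables $y=v^{-1}(x)$.
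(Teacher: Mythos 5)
Your proposal is correct and follows exactly the route the paper intends: the paper's own proof is just the remark that the result ``follows from Theorem \ref{tw4}'' with details omitted, and you supply precisely those details --- reducing risk aversion to the Jensen inequality $\CC(u(Y))\le u(\CC(Y))$ for nonnegative $Y=w-X$, applying Theorem \ref{tw4} (valid since $0<\mu(B)<1$ for some $B$), and repeating the scheme of Theorem \ref{tw6} with $g=u\circ v^{-1}$ together with the $g''$ identity to pass to the comparison $r_u\ge r_v$ on $I\cap[0,\infty)$. Your attention to the realization of every nonnegative $Y$ via $w=0$, $X=-Y$, and to the domain bookkeeping under $v^{-1}$, fills in exactly the steps the authors left implicit.
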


\begin{proof} The proof is 
straightforward. This follows from Theorem \ref{tw4}. We omit the details. 
\end{proof}

\section{Appendix}

We give below a few commonly encountered examples of capacities. 

\begin{przyklad}\label{ex2} Let $g\colon [0,1]\to [0,1]$ be a such non-decreasing function that $g(0)=0$ and~$g(1)=1, $ called the {\it weighting function} or the {\it distortion}. The distortions of probability measure $P$ of the form $\nu(A)=g\big(P(A)\big)$ are capacities. They are crucial for~the Cumulative Prospect Theory proposed by Kahneman and Tversky $\cite{kah},$ which concerns the behavior of agents on financial market using psychological aspects. It is worthy to emphasize that Kahneman was awarded with Nobel Prize in Economic Sciences in 2002 for that theory. The capacities also became a basic tool to measure risk in insurance mathematics $\cite{follmer, ka1,ka4}.$ 
\end{przyklad}

\begin{przyklad}\label{ex1}
Let  $\mathcal{P}$ be a family of probability ditributions on $\Omega$ and let parameter $\theta\in [0,1]$ balances optimistic and pessimistic
attitude of an agent. The {\it Hurwicz capacity} of the form 
\begin{align*}
\mu^*(A)=\theta \inf_{P\in\mathcal{P}} P(A)+(1-\theta)\sup_{P\in\mathcal{P}}P(A),\end{align*} 
is used in theory of decision making, if we have only partial information on the distribution $P$ of random outcome (see $\cite{gul}$). 
\end{przyklad}

\begin{przyklad} The {\it possibility measure} is defined by 
$\mu (A)=\sup _{x\in A}\psi (x)$ for $A\neq \emptyset$, where $\psi\colon \Omega\to [0,1]$ is any such function that $\sup_{x\in \Omega}\psi(x)=1.$  The {\it necessity measure} is the conjugate of a possibility measure 
 $\cite{dub,zwang1}.$
\end{przyklad}

\begin{przyklad} Given a nonempty subset $K$ of $\Omega,$  we call $\mu_K$ an {\it unanimity game} if $\mu _K(A)=1$ for $K\subset A$ and $\mu _K(A)=0$ otherwise. The set $K$ is called a {\it coalition}. Each $\{0,1\}$-valued capacity on a finite set $\Omega$ can  be written as
a maximum of unanimity games  or as 
a linear combination of unanimity games with integer coefficients (see $\cite{den2}$).
\end{przyklad}

\begin{przyklad}\label{ex3}
A map $m\colon 2^\Omega \to [0,1]$ is called the {\it mass function}, if it satisfies conditions $m(\emptyset)=0$ and $\sum_{A\in 2^\Omega}m(A)=1$. 
The {\it belief function} and {\it plausibility function} are defined, respectively, as follows
\begin{align*}
\text{Bel}(A)=\sum_{B\subset A} m(B),\quad\quad \text{Pl}(A)=\sum_{B\cap A\neq\emptyset} m(B).
\end{align*}
Both functions are examples of capacities $\cite{zwang1}$. The maps defined above play a crucial role in the Dempster-Schafer~theory $\cite{19}.$
\end{przyklad}

\begin{przyklad}\label{ex5} A map  $M\colon\mathcal{F}\to [0,1]$ is called 
the {\it uncertainty measure}, if  
\begin{enumerate}
\item[(M1)] $M(\Omega)=1,$
\item[(M2)] $M(A^c)+M(A)=1$ for all  $A\in \mathcal{F},$ 
\item[(M3)]   
$M\Big(\bigcup_{n=1}^\infty A_n\Big)\le \sum_{n=1}^\infty M(A_n)$
for any sequence  $(A_n)\subset \mathcal{F},$
\end{enumerate}
This term was introduced by  Liu $\cite{bliu11},$
We will show that the uncertainty measure is a capacity. 
A consequence of~M1 and~M2 is that $M(\emptyset)=0.$ If $\Omega=A^c\cup B,$ \label{ozndopelnienie} where $A\subset B$,
then from ~M1, M3 and M2 it follows that 
\begin{align*}
1=M(X)=M(A^c\cup B)\le M(A^c)+M(B)=1-M(A)+M(B),
\end{align*} 
so $M$ is a monotone set function, as desired.  

A particular case of  uncertainty measure is  the {\it credibility measure}  Cr$(A)$ defined by
\begin{align}\label{ge}
\text{Cr}(A)=\Bigl(\sup_{x\in A} v(x)+1-\sup_{x\in A^c} v(x)\Bigr)/2, 
\end{align}
where $v\colon \Omega \to [0,1]$ is any function. 
\end{przyklad}

%


\begin{thebibliography}{99}
\bibitem{Ab} M. Abdellaoui, Parameter-free elicitation of utilities and
probability weighting functions, Management Science 46 (2000) 1497--1512. 

\bibitem{arrow} K.J. Arrow, Essays in the Theory of Risk Bearing,
Markham Publishing Company, Chicago, 1971.

\bibitem{georgescu2014b} A.M.L. Casademunt, I. Georgescu, The optimal
saving with mixed parameters, Procedia Economics and Finance 15 (2014) 326--333.

\bibitem{daroczy} Z. Daróczy, Z. Páles, Convexity with given infinite
weight sequences, Stochastica 11 (1987) 5--12.

\bibitem{den}  D. Denneberg, Non-additive measure and integral, Kluwer Academic Publishers, Dordrecht, 1994.

\bibitem{den2} D. Denneberg, Totally monotone core and products of
monotone measures, International Journal of Approximate Reasoning 24 (2000) 273--281.


\bibitem{dub} D. Dubois, H. Prade, Possibility Theory: An Approach to Computerized Processing of Uncertainty, Plenum Press, New York, 1988.

\bibitem{ek} L. Eeckhoudt, C. Gollier, H. Schlesinger, Economic and
Financial Decisions under Risk, Princeton University Press, Princeton, 2005.

\bibitem{follmer} H. F$\ddot{\hbox{o}}$llmer, A. Schied, Stochastic
Finance. An Introduction in Discrete Time, De Gruyter, Berlin, 2011.

\bibitem{georgescu2012} I. Georgescu, Expected utility operators and
possibilistic risk aversion, Soft Computing 16 (2012)  1671--1680.

\bibitem{georgescu2013} I. Georgescu, Possibility risk aversion and
coinsurance problem, Fuzzy Information and Engineering 2 (2013) 221--233.

\bibitem{georgescu2014} I. Georgescu, Risk aversion, prudence and mixed optimal saving models, Kybernetika 50 (2014) 706--724. 

\bibitem{geor-kinu} I. Georgescu, J. Kinnunen, A
credibilistic approach to risk aversion and prudence, Proceedings of the
Finnish Operations Research Society 40th AnniversaryWorkshop (FORS40) on
Optimization and Decision-Making, Lappeenranta, Finland, August 20-21 (2013) 72--77.

\bibitem{Girotto} B. Girotto, S. Holtzer, Chebyshev and Jensen inequalities for Choquet integral, Mathematica Pannnonica 23 (2012) 267--275.

\bibitem{Gold} W.M. Goldstein, H.J. Einhorn, Expression theory and
the preference reversal phenomenon, Psychological Review  94 (1987)  236--254.

\bibitem{golier} Ch. Gollier, The Economics of Risk and Time, MIT
Press, Cambridge, 2001.

\bibitem{gul} F. Gul, W. Pasendorfer, Hurwicz expected utility and
subjective sources, Journal of Economic Theory 159 (2015) 465--488.


\bibitem{grabisch} M. Grabisch, J.L. Marichal, R. Mesiar, E. Pap, Aggregation Functions, Encyclopedia of Mathematics and Its Applications 127, Cambridge University Press, Cambridge, 2009.

\bibitem{ka} M.A. Khan, G.A. Khan, T. Ali, A. Kilicman, On the refinement of Jensen’s inequality, Applied
Mathematics and Computation 262 (2015) 128--135.


\bibitem{ka1} M. Kaluszka, M. Krzeszowiec,  Mean--value principle under
Cumulative Prospect Theory, ASTIN Bulletin 42 (2012) 103--122. 

\bibitem{ka4} M. Kaluszka, M. Krzeszowiec, An iterativity condition for
the mean-value principle under Cumulative Prospect Theory, ASTIN Bulletin
43 (2013) 61--71. 

\bibitem{ka5} M. Kaluszka, A. Okolewski, M. Boczek, On the Jensen type inequality for generalized Sugeno integral, Information Sciences 266 (2014) 140--147.

\bibitem{kocic} Lj.M. Kocić, I.B. Lacković, Convexity criterion
involving linear operators. Facta Univ. (Ni\u{s}) Ser. Mat. Inform. 1 (1986)
13--22. 

\bibitem{kuczma} M. Kuczma, An Introduction to the Theory of Functional
Equations and Inequalities. Cauchy’s Equation and Jensen’s Inequality.
Second edition, ed. Attila Gilányi, Birkh$\ddot{\hbox{a}}$user, 2009.

\bibitem{kuhn} N. Kuhn, A note on $t$-convex functions. In {\it General
inequalities}, 4 (Oberwolfach, 1983) Internat. Schriftenreihe Numer. Math.
71, Birkhu$\ddot{\hbox{a}}$ser, Basel (1984) 269--276. 

\bibitem{bliu11} B. Liu, Uncertainty Theory, 4th Edition, Springer, Berlin, 2015.

\bibitem{Machina} M.J. Machina, W.K. Viscusi, Handbook of the Economics
of Risk and Uncertainty, North-Holland, Amsterdam, 2014.

\bibitem{mit} D. Mitrinović, Classical and New Inequalities in Analysis,
Springer, Berlin, 1992. 

\bibitem{monte} A. Montesano, De Finetti and the Arrow-Pratt Measure
of Risk Aversion. Bruno de Finetti Radical Probabilist (First ed). Ed. Maria
Carla Galavotti, London: College Publications (2009) 115--127.

\bibitem{munk} C. Munk,  Financial Asset Pricing Theory, Oxford
University Press, Oxford, 2014.

\bibitem{Nic} C.P. Niculescu, L.E. Persson,  Convex functions and their applications, Springer Science, New York, 2006. 

\bibitem{pap} E. Pap, M. \v{S}trboja, Generalization of the Jensen inequality for pseudo-integral, Information
Sciences 180 (2010) 543--548.

\bibitem{proschan} J.E. Pečarić, F. Proschan, J.L. Tong, Convex Function,
Partial Ordering and Statistical Applications, Academic Press, New York, 1991.

\bibitem{pena} G. Pennacchi, Theory of Asset Pricing, Pearson
Education, Boston, 2008.

\bibitem{pratt} J.W. Pratt, Risk aversion in the small and in the
large, Econometrica 32 (1964) 122--136.

\bibitem{prelec1} D. Prelec, The probability weighting function, Econometrica 66 (1998) 497--527.

\bibitem{prelec2} M. Rieger, M. Wang, Cumulative prospect theory and the St. Petersburg paradox, Economic Theory 28 (2006) 665--679.

\bibitem{rom} H. Rom\'{a}n-Flores, A. Flores-Franuli\v{c}, Y. Chalco-Cano, A Jensen type inequality for fuzzy
integrals, Information  Sciences 177 (2007) 3192--3201.


\bibitem{19} G. Shafer,  A Mathematical Theory of Evidence, Princeton
University Press, Princeton, 1976.

\bibitem{st} M. \v{S}trboja, T. Grbi\'{c}, I. \v{S}tajner-Papuga, G. Gruji\'{c}, S. Medi\'{c}, Jensen and Chebyshev inequal
ities for pseudo-integrals of set-valued functions, Fuzzy Sets and Systems 222 (2013) 18--32.


\bibitem{kah} A. Tversky,  D. Kahneman,  Advances in prospect theory: Cumulative representation of uncertainty, Journal of Risk and Uncertainty 5 (1992)  297--323.

\bibitem{zwang1} Z. Wang,  G. Klir, Generalized Measured Theory, Springer, Berlin, 2009. 

\bibitem{zhou} J. Zhou, Y. Liu, X. Zhang, X. Gu, D. Wang  (in press)
Uncertain risk aversion, Journal of Intelligent Manufacturing, DOI:
10.1007/s10845-014-1013-5.
\end{thebibliography}
\end{document}